\newcommand{\rs}{\mathrm{rs}}
\newcommand{\GL}{\mathrm{GL}}
\newcommand{\Mat}{\mathrm{Mat}}
\newcommand{\F}{\mathbb{F}}
\newcommand{\G}{\mathcal{G}_q(k,n)}
\newcommand{\Vvs}{\mathcal{V}}
\newcommand{\Uvs}{\mathcal{U}}
\newcommand{\Cvs}{\mathcal{C}}
\newtheorem{example}[theorem]{Example}
\title{Pl\"ucker Embedding of Cyclic Orbit Codes\thanks{Research partially supported
    by Swiss National Science Foundation Project no. 138080.} }%
\author{Anna-Lena Trautmann\thanks{Institute of Mathematics,
  University of Zurich, Switzerland ({\tt www.math.uzh.ch/aa}).}}
\begin{document}

\maketitle

\begin{abstract}
Cyclic orbit codes are a family of constant dimension codes used for random network coding. We investigate the Pl\"ucker embedding of these codes and show how to efficiently compute the Grassmann coordinates of the code words.
\end{abstract}

\begin{keywords} 
random network coding, subspace codes, Grassmannian, finite fields, Pl\"ucker embedding
\end{keywords}

\begin{AMS}
11T71, 14G50, 68P30
\end{AMS}

\pagestyle{myheadings}
\thispagestyle{plain}
\markboth{A.-L. TRAUTMANN}{PL\"UCKER EMBEDDING OF CYCLIC ORBIT CODES}

\section{Introduction}

In network coding one is looking at the transmission of information
through a directed graph with possibly several senders and several
receivers \cite{ah00}. One can increase the throughput by linearly
combining the information vectors at intermediate nodes of the
network.  If the underlying topology of the network is unknown we
speak about \textit{random linear network coding}. Since linear spaces
are invariant under linear combinations, they are what is needed as
codewords \cite{ko08}. It is helpful (e.g. for decoding) to constrain
oneself to subspaces of a fixed dimension, in which case we talk about
\emph{constant dimension codes}.

The general linear group, consisting of all invertible transformations acts naturally on the set of all $k$-dimensional vector spaces, called the Grassmann variety. 
Orbits under this action are
called \textit{orbit codes} \cite{tr10p}. Orbit codes have useful
algebraic structure and can be seen as subspace analogues of linear block
codes in some sense \cite{tr11p}.

One can describe the balls of subspace radius $2t$ in the Grassmann variety in its Pl\"ucker embedding. Such an algebraic description of the balls of
radius $2t$ is potentially important if one is interested in an algebraic
decoding algorithm for constant dimension codes. For instance, a list decoding algorithm
requires the computation of all code words which lie in some
ball around a received message word.
In this work we characterize the Pl\"ucker embedding of orbit codes
generated by cyclic subgroups of the general linear group. 
The case of irreducible cyclic subgroups has already been studied in \cite{ro12}. This work generalizes and completes those results for general cyclic orbit codes.

The paper is structured as follows: In Section \ref{sec2} we give some preliminaries on random network coding and orbit codes in particular. Moreover, we define irreducible and completely reducible matrices and groups. Section \ref{sec3} contains the main results of this work. We first investigate the balls around subspaces of radius $2t$ with respect to the Grassmann coordinates. Then we recall how to efficiently compute these coordinates of irreducible cyclic orbit codes and show how the same can be done for reducible cyclic orbit codes. For this we distinguish between completely reducible and non-completely reducible generating groups. In Section \ref{sec4} we conclude this work.

%%%%%%%%%%%%%%%%%%%%%%%%%%%%%%%%%%%%%%%%%%%%%%%%%%%%%%%%%
\section{Preliminaries}\label{sec2}

Let $\mathbb{F}_q$ be the finite field with $q$ elements, where $q$ is
a prime power. We will denote the set of all $k$-dimensional subspaces of $\F_{q}^{n}$, called the Grassmannian, by $\G$ and the general linear group over $\F_{q}$ by $\GL_{n}$. Moreover, the set of all $k\times n$-matrices
over $\F_q$ is denoted by $\Mat_{k\times n}$.

Let $U\in \Mat_{k\times n}$ be a matrix of rank $k$ and
\[
\mathcal{U}=\rs (U):= \text{row space}(U)\in \G.
\]
One can notice that the row space is invariant under $\GL_k$-multiplication from
the left, i.e. for any $T\in \GL_k$ it holds that
$
\mathcal{U}=\rs(U)= \rs(T U).
$
Thus, there are several matrices that represent a given subspace. 

The \emph{subspace distance} $d_{S}$, given by
%\begin{align*}
 $$ d_S(\mathcal{U},\mathcal{V}) = \dim(\Uvs)+\dim(\Vvs) - 2\dim(\mathcal{U}\cap
  \mathcal{V})%\\
%  =& 2\cdot \mathrm{rank}\left[ \begin{array}{c} U \\ V \end{array} \right] - (\dim(\Uvs)+\dim(\Vvs))
%\end{align*}
\forall \Uvs,\Vvs\in \G$$
is a metric on $\G$ and is suitable for coding over the operator channel \cite{ko08}. 
A \textit{constant dimension codes} is simply a subset of the
Grassmannian $\G$. The minimum distance is defined
in the usual way. Different constructions of constant dimension codes can be found in e.g. \cite{et08u,ko08p,ko08,ma08p,si08a,tr10p}.

Given $U\in \Mat_{k\times n}$ of rank $k$,
$\mathcal{U}\in \G$ its row space and $A\in \GL_n$, we
define
$
\mathcal{U} A:=\rs(U A).
$
This multiplication with $\GL_n$-matrices defines a
group action from the right on the Grassmannian:
\[
\begin{array}{ccc}
  \G\times \GL_n& \longrightarrow &
  \G \\ 
  (\mathcal{U},A) & \longmapsto & \mathcal{U} A
\end{array}
\]
For a
subgroup $G$ of $\GL_n$ the set
$
\mathcal{C}= \{\mathcal{U} A \mid A \in  {G}\}
$
is called an \emph{orbit code} \cite{tr10p}. 
There are different subgroups that generate the same orbit code.
An orbit code is called \emph{cyclic} if it can be defined by a cyclic
subgroup $ {G} \leq \GL_n$.

%%%%%%%%%%%%%%%%%%%%%%%%%%%%%%%%%%%%%%%%%%%%%%%%%%%%%%%%%%%%%%

\begin{definition}
  \begin{enumerate}
  \item A matrix $G\in \GL_n$ or a subgroup $G\leq \GL_n$ is called \emph{irreducible} if 
    $\F_q^n$ contains no non-trivial $A$-invariant subspace, otherwise it is
    called \emph{reducible}.
%   \item A subgroup $G\le \GL_n$ is called
%     \emph{irreducible} if $\F_q^n$ contains no
%     $G$-invariant subspace, otherwise it is called
%     \emph{reducible}.
%    \item  An orbit code $\mathcal{C} \subseteq \G$ is called 
%     \emph{irreducible} if $\mathcal{C}$ can be generated by an irreducible group.
%   \end{enumerate}
% \end{definition}
% 
% \begin{definition}
%   \begin{enumerate}
  \item A matrix $G\in \GL_n$ or a subgroup $G\leq \GL_n$ is called \emph{completely reducible} if $\F_q^n$ is the direct sum of $G$-invariant subspaces which do not have any non-trivial $G$-invariant proper subspaces.
%   \item A subgroup $ {G}\le \GL_n$ is called
%     \emph{completely reducible} if it is the direct product of irreducible groups, i.e. if $\F_q^n$ is the direct sum of $G$-invariant subspaces which do not have any $G$-invariant proper subspaces.
  \item  An orbit code $\mathcal{C} \subseteq \G$ is called \emph{irreducible}, respectively 
    \emph{completely reducible}, if $\mathcal{C}$ can be generated by an irreducible, respectively a completely reducible, group.
  \end{enumerate}
\end{definition}
A cyclic group is irreducible (resp. completely reducible) if and only if its generator matrix is
irreducible (resp. completely reducible). Moreover, an invertible matrix is irreducible if and only
if its characteristic polynomial is irreducible. An invertible matrix is completely reducible if and only if the blocks of its rational canonical form are companion matrices of irreducible polynomials. For more information the reader is referred to \cite{tr11p}, where one can also find that
orbit codes arising from conjugate groups are equivalent from a coding theoretic perspective. Therefore, we can restrict our studies to cyclic subgroups generated by matrices in rational canonical form.

% Note that the definition of an irreducible matrix $G$
% implies the existence of a {\em cyclic vector} $v\in \F_{q}^n$
% having the property that
% $$
% \{ v,vG,vG^2,\ldots,vG^{n-1}\}
% $$
% forms a basis of $\F_{q}^n$. Let $S\in \GL_n$ be the basis transformation
% which transforms the matrix $G$ into this new basis. Then it follows that
% $$
% SGS^{-1}=
% \left( 
%   \begin{array}{ccccc}
%     0 & 1 & 0 &\ldots  &0\\
%     0 & 0 & 1 &\ldots  &0\\
%  \vdots&\vdots&\vdots&\ddots&\vdots\\
%    0 & 0 & 0 &\ldots  &1\\
%    -p_0 & -p_{1} & & \ldots  &-p_{n-1}
%   \end{array}
% \right).
% $$
% One readily verifies that 
% $$
% p(x):=x^n+p_{n-1}x^{n-1}+\cdots+p_1x+p_0
% $$
% is the characteristic polynomial of both $G$ and $SGS^{-1}$.  The matrix $SGS^{-1}$ is called the \emph{companion matrix} of $p(x)$. 
%By convention we will use row
%vectors and accordingly companion matrices where the coefficients of
%the corresponding polynomials are in the last row (instead of the last column).

One can describe the multiplicative action of companion matrices of irreducible polynomials via the
Galois extension field isomorphism.
 Let $p(x)=\sum p_{i}x^{i}$ be a monic irreducible polynomial over $\mathbb{F}_{q}$ of
  degree $n$ and $P$ its companion matrix
  $$P=\left(\begin{array}{ccccc} 0&1&0& \dots &0\\
  0&0&1&&0\\
  \vdots&\vdots&&\ddots&\\
  0&0&0&&1\\
  -p_{0}& -p_{1}&-p_{2}&\dots&-p_{n-1}
 \end{array}\right).$$
  Furthermore, let $\alpha
  \in \mathbb{F}_{q^n} $ be a root of $p(x)$. By $\phi^{(n)}$ we denote the
  canonical isomorphism
  \begin{align*}
    \phi^{(n)}: \quad \mathbb{F}_q^n &\longrightarrow \mathbb{F}_{q^n}\cong \F_{q}[\alpha] \\
    (v_1,\dots,v_n) &\longmapsto \sum_{i=1}^n v_i \alpha^{i-1} .
  \end{align*}
  % 
%\begin{theorem}
The multiplication with $P$, respectively $\alpha$, commutes with $\phi^{(n)}$
\cite{ro12}, i.e. %\label{compmult}
  for $v \in \mathbb{F}_q^n $ it holds that
  \[ \phi^{(n)}(vP) = \phi^{(n)}(v) \alpha  .
%   \begin{diagram}
%  v & \rTo^{\cdot P} & vP \\
%  \dTo^{\phi} & & \dTo_{\phi}\\
%  v' & \rTo_{\cdot\alpha} & v'\alpha
%  \end{diagram}
  \]
%\end{theorem}

For a matrix $A\in \Mat_{m\times n}$ denote
  by $A_{i_1,\dots,i_k}$ the submatrix of $A$
  consisting of the complete rows $i_1,\dots,i_k$ and by $A[j_1,\dots,j_k]$ the submatrix of $A$ with the complete
  columns $j_1,\dots,j_k$. Consequently, $A_{i_1,\dots,i_k}[j_1,\dots,j_l]$ denotes the submatrix defined by rows $i_1,\dots,i_k$ and columns $j_1,\dots,j_l$.

%%%%%%%%%%%%%%%%%%%%%%%%%%%%%%%%%%%%%%%%%%%%%%%%%%%%%%%%

%%%%%%%%%%%%%%%%%%%%%%%%%%%%%%%%%%%%%%%%%

  \section{Pl\"ucker Embedding of Cyclic Orbit Codes}\label{sec3}
  
Let us first recall from \cite{ro12} how to describe the balls of radius $2t$
(with respect to the subspace distance) around some $\Uvs \in \G$ with
the help of the Grassmann coordinates. 

%\begin{theorem}\label{isoem}
  The map $    \varphi : \G \longrightarrow \mathbb{P}^{\binom{n}{k}-1}(\F_{q}) $ 
that maps $\rs(U)$ to
\begin{align*}
 [\det(U[1,\dots,k]) : \det(U[1,\dots,k-1,k+1]) : \dots :  \det(U[n-k+1,\dots,n])] 
\end{align*}
is an embedding of the Grassmannian in the projective space of dimension $\binom{n}{k}-1$. 
%\end{theorem}
It is called the \emph{Pl\"ucker embedding} of the Grassmannian $\G$.
The projective coordinates 
\begin{align*}
&[\det(U[1,\dots,k]) : \dots : \det(U[n-k+1,\dots,n])] := \\&  
\F_{q}^{*}\cdot (\det(U[1,\dots,k]) , \dots , \det(U[n-k+1,\dots,n]))
\end{align*}
are often referred to as the \emph{Pl\"ucker} or 
\emph{Grassmann coordinates} of $\rs(U)$. For more information on the Pl\"ucker embedding of the Grassmannian the reader is referred to \cite{ho47}.

%We also need the following partial order:
\begin{definition}
  Consider the set $\binom{[n]}{k} := \{(i_1,\dots,i_k) \mid i_l \in
  \{1,\dots,n\} \;\forall l\}$ and define a partial order on it:
 $$
  (i_{1},\dots,i_k) > (j_{1},\dots,j_{k}) \iff \exists N\in
  \mathbb{N}_{\geq 0} : i_{l}=j_{l} \;\forall l<N \textnormal{ and }
  i_{N} > j_{N} .$$
\end{definition}

Denote by $I_{k}$ the identity matrix of size $k$ and by $0_{k\times m}$ the $k \times m$-matrix with only zero entries. 
It is easy to compute the balls  of radius $2t$ around a vector space $\Uvs$, denoted by  $B_{2t}(\Uvs)$, in the following
special case.   

\begin{proposition}\cite[Proposition 3]{ro12}
Define $\Uvs_{0}:=\rs
  [\begin{array}{cc}I_{k} &0_{k\times n-k} \end{array}]$. Then
  \begin{align*}
    B_{2t}(\Uvs_{0}) = \{\Vvs \in \G \mid &
    %\varphi(\Vvs) =[\det(\Vvs[1,\dots,k]):\dots: \det(\Vvs[n-k+1,\dots,n])] , \\
\det(\Vvs[i_1,\dots,i_k])=0 \\ 
&\;\forall (i_{1},\dots,i_{k}) \not
    \leq (t+1,\dots,k,n-t+1,\dots,n) \} .
  \end{align*}
\end{proposition}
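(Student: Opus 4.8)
The plan is to characterize membership $\Vvs \in B_{2t}(\Uvs_0)$ directly in terms of $\dim(\Uvs_0 \cap \Vvs)$ and then translate the intersection condition into vanishing of Grassmann coordinates. Since $\Uvs_0$ and $\Vvs$ are both $k$-dimensional, $d_S(\Uvs_0,\Vvs) = 2k - 2\dim(\Uvs_0\cap\Vvs)$, so $d_S(\Uvs_0,\Vvs)\le 2t$ is equivalent to $\dim(\Uvs_0\cap\Vvs)\ge k-t$. Thus I must show that this dimension bound holds if and only if $\det(\Vvs[i_1,\dots,i_k])=0$ for every index tuple $(i_1,\dots,i_k)$ with $(i_1,\dots,i_k)\not\le (t+1,\dots,k,n-t+1,\dots,n)$.

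First I would fix a matrix representation. Put $\Vvs=\rs(V)$ with $V\in\Mat_{k\times n}$ of rank $k$, and note that $\Uvs_0$ consists of all vectors whose last $n-k$ coordinates vanish. Write $V = [\,V'\mid V''\,]$ where $V'$ is the first $k$ columns and $V''$ the last $n-k$ columns. A vector $xV\in\Uvs_0$ (for $x\in\F_q^k$) exactly when $xV''=0$, so $\dim(\Uvs_0\cap\Vvs) = k - \mathrm{rank}(V'')$. Hence the condition $\dim(\Uvs_0\cap\Vvs)\ge k-t$ is equivalent to $\mathrm{rank}(V'')\le t$. Since left multiplication by $\GL_k$ changes neither $\Vvs$ nor any $\det(V[i_1,\dots,i_k])$ up to a common scalar, I may reduce $V$ to a convenient form — e.g. a form in which the structure of $V''$ (which has only $n-k$ columns, all among columns $k+1,\dots,n$) is transparent.

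Next I would connect $\mathrm{rank}(V'')\le t$ to the Plücker coordinates. The key combinatorial observation is that a tuple $(i_1,\dots,i_k)$ satisfies $(i_1,\dots,i_k)\le(t+1,\dots,k,n-t+1,\dots,n)$ precisely when at most $t$ of the indices $i_1,\dots,i_k$ lie in $\{k+1,\dots,n\}$ — equivalently, at least $k-t$ of them lie in $\{1,\dots,k\}$; one checks this by examining the first position where the comparison with $(t+1,\dots,k,n-t+1,\dots,n)$ can fail. So the condition in the proposition says: $\det(V[i_1,\dots,i_k])=0$ whenever more than $t$ of the chosen columns come from the last $n-k$ columns. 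If $\mathrm{rank}(V'')\le t$, then any $k\times k$ submatrix of $V$ using more than $t$ columns of $V''$ has more than $t$ columns lying in a space of dimension $\le t$, hence is singular — this gives one direction. For the converse, I would use a Laplace/Cauchy–Binet style expansion (or the cofactor structure after row reduction) to show that if all such minors vanish, then $V''$ cannot have rank exceeding $t$: concretely, reduce $V$ so that $V'$ contains an identity block in $k - \mathrm{rank}(V'')$ rows and argue that a nonzero $(\mathrm{rank}(V''))\times(\mathrm{rank}(V''))$ minor of $V''$ would extend, via those identity rows, to a nonvanishing $k\times k$ minor using exactly $\mathrm{rank}(V'')>t$ columns from $\{k+1,\dots,n\}$, contradicting the hypothesis.

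The main obstacle I anticipate is the converse direction — showing that the vanishing of all "bad" minors forces $\mathrm{rank}(V'')\le t$ — because one must produce, from an assumed large minor of $V''$, a genuine nonzero $k\times k$ minor of the full matrix $V$, and this requires choosing the complementary columns from $V'$ compatibly with a chosen row reduction. Handling this cleanly is mostly a matter of picking the right normal form for $V$ under $\GL_k$ (putting it in reduced row echelon form and tracking pivot columns) so that the extension of a submatrix of $V''$ to a full-size submatrix is automatic; once the normal form is fixed the determinant bookkeeping is routine. The forward direction and the order-theoretic reformulation of the index condition are straightforward.
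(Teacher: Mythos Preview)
The paper does not actually supply a proof of this proposition: it is quoted from \cite{ro12} and immediately used. So there is nothing in the present paper to compare your argument against.

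On its own merits your outline is sound and is the standard argument. The reduction $d_S(\Uvs_0,\Vvs)\le 2t \Leftrightarrow \mathrm{rank}(V'')\le t$ with $V=[\,V'\mid V''\,]$ is correct, and your combinatorial reformulation---that an increasing tuple $(i_1,\dots,i_k)$ satisfies the order constraint against $(t+1,\dots,k,n-t+1,\dots,n)$ exactly when at most $t$ of its entries exceed $k$---is the right bridge to the Pl\"ucker side. The forward implication is immediate (more than $t$ columns drawn from a rank-$\le t$ block force singularity), and your plan for the converse (row-reduce so that the bottom $k-r$ rows of $V''$ vanish, pick a nonsingular $(k-r)\times(k-r)$ block from $V'$ and a nonsingular $r\times r$ block from $V''$, and read off a block-triangular $k\times k$ minor) goes through without difficulty; the ``obstacle'' you flag is not a real one once this normal form is chosen.

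One caveat worth recording: your combinatorial claim is valid for the \emph{componentwise} (Bruhat) partial order on increasing $k$-tuples, which is indeed the order under which the proposition holds. The order as literally written in the paper is lexicographic, and under that reading the statement is false (e.g.\ for $n=6$, $k=3$, $t=1$ the subspace spanned by $e_1,e_4,e_5$ has all seven lexicographically ``large'' minors equal to zero yet lies at distance $4$ from $\Uvs_0$). Your argument implicitly---and correctly---uses the componentwise order; just make that explicit.
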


The proposition shows that $B_{2t}(\Uvs_0)$ is described in the 
Pl\"ucker space $ \mathbb{P}^{\binom{n}{k}-1}(\F_{q})$ as a point in the
Grassmannian together with linear constraints on the Grassmann coordinates.

%\begin{example}
%  In $\mathcal{G}_{2}(2,4)$ we have
%  \[\Uvs_{0} = \rs\left[\begin{array}{cccc}1 & 0&0&0\\ 0&1&0&0
%    \end{array}\right]  \]
%  and the elements of distance $2$ (i.e. $t=1$) are
%  \begin{align*}
%    B_{2}(\Uvs_{0}) = \{&\Vvs \in \mathcal{G}_{2}(2,4) 
%    \mid \det(\Vvs[i_1, i_{2}]) = 0  \;\forall (i_{1}, i_{2}) \not \leq  (2,4)  \} \\
%    = \{&\Vvs \in \mathcal{G}_{2}(2,4) \mid \det(\Vvs[3,4]) = 0 \} .
%  \end{align*}
%\end{example}

To derive the equations for a ball $B_{2t}(\Uvs)$ around an arbitrary 
subspace $\Uvs\in\G$ note, that for any $\Uvs\in \G$ there exists an $A\in \GL_{n}$ such that $\Uvs_{0}A=\Uvs$. Then a direct computation shows that
%\begin{proposition}
$$
B_{2t}(\Uvs) = B_{2t}(\Uvs_{0} A) =B_{2t}(\Uvs_{0}) A .
$$
Thus, the multiplication by $A$ transforms the linear equations
$\det(\Vvs[i_1,\dots,i_{k}]) = 0 \;\forall (i_{1},\dots,i_{k}) \not \leq
(t+1,\dots,k,n-t+1,\dots,n)$ into a new set of linear equations in the
Grassmann coordinates. 
%\end{proposition}
  
Now we want to show how to explicitly compute the Grassmann coordinates of cyclic orbit codes. To do so we first recall the results for the Pl\"ucker embedding of irreducible cyclic orbit codes from \cite{ro12}. Thereafter, we will extend this theory first to completely reducible  and then to non-completely reducible orbit codes.

%%%%%%%%%%%%%%%%%%%%%%%%%%%%%%%%%%%%%%%%%%%%%%%%%%%%%%%%%%%%%%%%%%%%%%%%%%%%

  \subsection{Irreducible Cyclic Orbit Codes}\label{secICOC}

 For this subsection let $\Uvs \in \G$, $p(x) = \sum_{i=0}^{n} p_{i}x^{i}
  \in \F_{q}[x]$ be a monic irreducible polynomial of degree $n$ and
  $\alpha$ a root of it. The companion matrix of $p(x)$ is denoted by
  $P$. Moreover, we define $\F_{q}^{*} := \F_{q}\setminus \{0\}$ and $\F_{q}^{*}[\alpha]:=\F_{q}[\alpha] \setminus \{0\}$.

\begin{definition}
  We define the following operation on the $k$-th outer product $ \Lambda^{k}(\F_{q}[\alpha])
  \cong \Lambda^{k}(\F_q^{n})$ :
  % Let $\alpha$ be a primitive element of $\F_{q^{n}}$. Then
  \begin{align*}
    * : \Lambda^{k}(\F_{q}[\alpha])\times \F_q^{*}[\alpha] 
    &\longrightarrow \Lambda^{k}(\F_{q}[\alpha])\\
    ((v_{1}\wedge \dots \wedge v_{2}), \beta) &\longmapsto (v_{1}\wedge
    \dots \wedge v_{2}) * \beta := (v_{1}\beta \wedge \dots \wedge
    v_{k}\beta) .
  \end{align*}
  This is a group action since $((v_{1}\wedge \dots \wedge v_{2}) *
  \beta) * \gamma = (v_{1}\wedge \dots \wedge v_{2}) * (\beta\gamma)$.
\end{definition}

\begin{theorem}\cite[Theorem 7]{ro12}\label{isoem}
  The following map is an embedding of the Grassmannian:
  \begin{align*}
    \varphi' : \G &\longrightarrow\mathbb{P}(\Lambda^{k}(\F_{q}[\alpha]) )\\
    \rs (U) &\longmapsto (\phi^{(n)}(U_{1}) \wedge \dots \wedge \phi^{(n)}(U_{k}))
    *\F_{q}^*
  \end{align*}
  It is isomorphic to $\varphi(\G)$ through the mapping
\begin{align*}   
& \varphi'(\rs(U))= \sum_{0\leq i_{1}<\dots<i_{k} < n} \mu_{i_{1}, \dots , i_{k}} (\alpha^{i_{1}} 
      \wedge \dots \wedge \alpha^{i_{k}})* \F_{q}^* \\  
       \longmapsto \quad & \varphi(\rs(U)) = [\mu_{0, \dots , k-1} : \dots : \mu_{n-k, \dots , n-1} ] .\end{align*}
\end{theorem}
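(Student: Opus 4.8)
The goal is to verify that $\varphi'$ is a well-defined injective map on $\G$ and that the displayed correspondence with the classical Plücker coordinates $\varphi$ is an isomorphism of the images. First I would check well-definedness: if $\rs(U)=\rs(U')$ then $U'=TU$ for some $T\in\GL_k$, and since $\phi^{(n)}$ is $\F_q$-linear, applying $\phi^{(n)}$ row-wise sends $U\mapsto(\phi^{(n)}(U_1),\dots,\phi^{(n)}(U_k))$ and the $\Lambda^k$ functoriality turns the left $T$-action into multiplication of the wedge by $\det(T)\in\F_q^*$. Since we quotient by the $*\,\F_q^*$-action (which on a pure wedge $v_1\wedge\cdots\wedge v_k$ acts, after choosing a basis, as scaling by $\beta^k$, but more simply we are taking the $\F_q^*$-orbit), the image is independent of the representative. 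So $\varphi'$ is well defined on $\G$.

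**Comparing coordinates.** The second step is the explicit identification. Expand each $\phi^{(n)}(U_j)=\sum_{i=0}^{n-1}U_{j,i+1}\alpha^{i}$ and multiply out the wedge $\phi^{(n)}(U_1)\wedge\cdots\wedge\phi^{(n)}(U_k)$ in terms of the basis $\{\alpha^{i_1}\wedge\cdots\wedge\alpha^{i_k} : 0\le i_1<\cdots<i_k<n\}$ of $\Lambda^k(\F_q[\alpha])$. The coefficient $\mu_{i_1,\dots,i_k}$ of $\alpha^{i_1}\wedge\cdots\wedge\alpha^{i_k}$ is, by the standard multilinear-expansion-and-antisymmetrization computation, exactly the $k\times k$ minor $\det(U[i_1+1,\dots,i_k+1])$ — i.e. the Plücker coordinate indexed by columns $i_1+1,\dots,i_k+1$. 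This is precisely the assignment $\varphi(\rs(U)) = [\mu_{0,\dots,k-1} : \dots : \mu_{n-k,\dots,n-1}]$, so the stated map from $\varphi'(\rs(U))$ to $\varphi(\rs(U))$ is just the coordinate change between the basis $\{\alpha^{i_1}\wedge\cdots\wedge\alpha^{i_k}\}$ of $\Lambda^k(\F_q[\alpha])$ and the standard basis of $\F_q^{\binom nk}$; it is a linear bijection, hence an isomorphism of projective spaces carrying one image onto the other. Injectivity of $\varphi'$ then follows from injectivity of the classical Plücker embedding $\varphi$ (quoted above) together with this bijection.

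**The subtle point.** The step requiring genuine care is the interplay between the $*$-action by $\F_q^*[\alpha]$ and the projectivization $\mathbb{P}(\Lambda^k(\F_q[\alpha]))$: one must confirm that $\varphi'$ really lands in — and the isomorphism is compatible with — the $\F_q^*$-orbit structure rather than the full $\F_q^*[\alpha]$-orbit. Concretely, the target in the theorem is $\mathbb{P}(\Lambda^k(\F_q[\alpha]))$ with points being $\F_q^*$-orbits of pure wedges, and I must check that scaling a representing matrix $U$ by $T\in\GL_k$ changes the wedge only by an element of $\F_q^*$ (namely $\det T$), never by a scalar in $\F_q^*[\alpha]\setminus\F_q^*$ — which is clear since $\det T\in\F_q^*$. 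The $*$-action by larger elements $\beta\in\F_q^*[\alpha]$ plays no role in well-definedness; it is recorded here only because it will be the relevant symmetry for the cyclic-orbit-code computations in the following subsections (multiplication by $\alpha$ corresponds to multiplication by $P$). So the proof reduces to: (i) $\F_q$-linearity of $\phi^{(n)}$ and functoriality of $\Lambda^k$ give well-definedness; (ii) the minor expansion identifies the $\Lambda^k$-coefficients with classical Plücker coordinates; (iii) the resulting linear change of basis is the claimed isomorphism, and injectivity is inherited from $\varphi$.

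**Remark on what can be cited.** Since the excerpt already states that $\varphi$ is an embedding (the classical Plücker embedding) and that multiplication by $P$ commutes with $\phi^{(n)}$, the only original content to establish is the coordinate dictionary in step (ii); everything else is bookkeeping. I would present step (ii) as the heart of the argument and keep (i) and (iii) brief.
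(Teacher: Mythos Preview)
The paper does not supply its own proof of this theorem; it is quoted verbatim from \cite[Theorem~7]{ro12}. That said, your proposal is correct and matches exactly the template the paper uses for the analogous completely reducible statement (Theorem~\ref{thm:main}): there the authors expand $\phi^{(n_1,n_2)}(U_1)\wedge\cdots\wedge\phi^{(n_1,n_2)}(U_k)$ in the basis of elementary wedges, read off the coefficients $\mu_{i_1,\dots,i_k}$ as the alternating sums $\sum_{\sigma\in S_k}(-1)^\sigma\lambda_{1\sigma(i_1)}\cdots\lambda_{k\sigma(i_k)}$ (i.e.\ the $k\times k$ minors of $U$), define $\psi$ to be the resulting change of basis to $\mathbb{P}^{\binom{n}{k}-1}$, and conclude that $\varphi'=\psi^{-1}\circ\varphi$ is an embedding because $\varphi$ is. Your steps (ii)--(iii) are precisely this argument specialized to a single block, and your step (i) on well-definedness (the $\det T\in\F_q^*$ observation) is a point the paper leaves implicit.
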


\begin{theorem}\cite[Theorem 8]{ro12}
It holds that 
$\varphi'(\Uvs P) = \varphi'(\Uvs) * \alpha$.
Hence, an irreducible cyclic orbit code $\Cvs = \{\Uvs
P^i \mid i=0,\dots,\mathrm{ord}(P)-1\}$ has a corresponding ``Pl\"ucker
orbit'':
\[
\varphi'(\Cvs) = \{\varphi'(\Uvs) * \alpha^i \mid i=0,\dots,\mathrm{ord}(\alpha)-1\}
 = \varphi'(\Uvs) * \langle \alpha \rangle   .
\]
\end{theorem}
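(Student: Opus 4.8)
The plan is to reduce everything to the single-row identity $\phi^{(n)}(vP)=\phi^{(n)}(v)\alpha$ recalled above, combined with the definition of the action $*$ on $\Lambda^{k}(\F_{q}[\alpha])$. First I would fix a matrix $U\in\Mat_{k\times n}$ of rank $k$ with $\Uvs=\rs(U)$ and rows $U_{1},\dots,U_{k}$. Since $P\in\GL_{n}$, the product $UP$ again has rank $k$ and $\Uvs P=\rs(UP)$, with $i$-th row $U_{i}P$; hence $UP$ is a legitimate representative to feed into $\varphi'$, and by definition
\[
\varphi'(\Uvs P)=\bigl(\phi^{(n)}(U_{1}P)\wedge\dots\wedge\phi^{(n)}(U_{k}P)\bigr)*\F_{q}^{*}.
\]

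Next I would substitute $\phi^{(n)}(U_{i}P)=\phi^{(n)}(U_{i})\alpha$ for each $i$, so that the decomposable $k$-vector inside becomes $\phi^{(n)}(U_{1})\alpha\wedge\dots\wedge\phi^{(n)}(U_{k})\alpha$. By the very definition of $*$ this equals $\bigl(\phi^{(n)}(U_{1})\wedge\dots\wedge\phi^{(n)}(U_{k})\bigr)*\alpha$. It then remains to move the $*\,\F_{q}^{*}$ past the $*\,\alpha$: because $\F_{q}^{*}[\alpha]=\F_{q^{n}}^{*}$ is abelian, the action of $\F_{q}^{*}$ commutes with that of $\alpha$, so $\bigl((\phi^{(n)}(U_{1})\wedge\dots\wedge\phi^{(n)}(U_{k}))*\alpha\bigr)*\F_{q}^{*}=\varphi'(\Uvs)*\alpha$, which is the first claim. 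The one point worth spelling out explicitly is that $\varphi'(\Uvs)*\alpha$ is well defined on the projective class, i.e. independent of the representative $U$ and of the $\F_{q}^{*}$-scaling; this is again immediate from commutativity of the $\F_{q}^{*}[\alpha]$-action together with the fact that replacing $U$ by $TU$ with $T\in\GL_{k}$ only multiplies the decomposable $k$-vector by $\det(T)\in\F_{q}^{*}$.

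For the ``Pl\"ucker orbit'' statement I would iterate this identity: $\varphi'(\Uvs P^{i})=\varphi'(\Uvs P^{i-1})*\alpha=\dots=\varphi'(\Uvs)*\alpha^{i}$ by induction on $i$, using that $*$ is a group action. Finally, $\mathrm{ord}(P)=\mathrm{ord}(\alpha)$ because $\phi^{(n)}$ intertwines multiplication by $P$ with multiplication by $\alpha$ (equivalently, $P$ and $\alpha$ share the minimal polynomial $p(x)$), so letting $i$ range over $0,\dots,\mathrm{ord}(P)-1$ is the same as letting it range over $0,\dots,\mathrm{ord}(\alpha)-1$, and $\varphi'(\Cvs)=\varphi'(\Uvs)*\langle\alpha\rangle$.

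I expect no serious obstacle; the only thing requiring a little care is the bookkeeping around the projectivization — confirming that the $\F_{q}^{*}$-scaling commutes with and does not interfere with the $\alpha$-action — and noting that $\varphi'$ applied to the representative $UP$ genuinely computes $\varphi'(\Uvs P)$.
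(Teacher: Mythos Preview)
The paper does not give its own proof of this theorem; it is quoted from \cite{ro12} without argument. Your proposal is correct and is exactly the natural proof one would expect: it reduces to the row-wise identity $\phi^{(n)}(vP)=\phi^{(n)}(v)\alpha$ stated in the preliminaries, the definition of the action $*$, and the commutativity of $\F_{q}^{*}[\alpha]$ to handle the projectivization, followed by an obvious induction and the equality $\mathrm{ord}(P)=\mathrm{ord}(\alpha)$.
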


 \subsection{Completely Reducible Cyclic Orbit Codes}

Completely reducible cyclic subgroups of $\GL_{n}$ are exactly the ones where the blocks of the rational canonical form (RCF) of the generator matrix are companion matrices of irreducible polynomials.
Because of this property one can use the theory of irreducible cyclic orbit codes block-wise to compute the minimum distances of the block component codes and hence the minimum distance of the whole code.

For simplicity, we will explain how the theory from before generalizes in the case of generator matrices whose RCF has two blocks that are companion matrices of the irreducible polynomials. The generalization to an arbitrary number of blocks is then straightforward.

Assume our generator matrix $P$ is of the type
\[P=\left( \begin{array}{cc} P_1 & 0 \\ 0 & P_2
           \end{array}
\right)\]
where $P_1, P_2$ are companion matrices of the monic primitive polynomials $p_1(x), p_2(x)\in \F_{q}[x]$ with $\deg(p_1)=n_1, \deg(p_2)=n_2$, respectively. Let $\alpha_{1}, \alpha_{2}$ be roots of $p_{1}(x), p_{2}(x)$, respectively. Furthermore, let
\[U=\left[ \begin{array}{cc}U_1 &U_2    \end{array}
\right]\]
be a matrix representation of $\Uvs \in \G$ such that $U_1 \in Mat_{k\times n_1}, U_2\in Mat_{k\times n_2}$.
Then
\[\Uvs P^i = \rs \left[\begin{array}{cc} U_1 P_1^i &   U_2 P_2^i \end{array}\right] .\]
By $\phi^{(n_{1})}: \F_{q}^{n_{1}}\rightarrow \F_{q^{n_{1}}}\cong \F_{q}[\alpha_{1}]$ and $ \phi^{(n_{2})} : \F_{q}^{n_{2}}\rightarrow \F_{q^{n_{2}}}\cong \F_{q}[\alpha_{2}]$ we denote the standard vector space isomorphisms.

 \begin{theorem}\label{th14}\cite[Theorem 43]{tr11p}
 The map
\begin{align*}
\phi^{(n_{1},n_{2})}: \F_q^n &\longrightarrow  \F_{q}[\alpha_{1}] \times \F_{q}[\alpha_{2}]\\
(u_1,\dots,u_n) &\longmapsto (\phi^{(n_1)}(u_1,\dots,u_{n_1}), \phi^{(n_2)}(u_{n_1+1},\dots,u_n))
\end{align*}
is a vector space isomorphism. 
\end{theorem}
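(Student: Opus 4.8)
The plan is to recognize $\phi^{(n_1,n_2)}$ as the direct sum of the two coordinate isomorphisms $\phi^{(n_1)}$ and $\phi^{(n_2)}$ along the canonical splitting $\F_q^n = \F_q^{n_1}\oplus\F_q^{n_2}$ into the first $n_1$ and the last $n_2 = n-n_1$ coordinates. First I would observe that the two truncation maps $\pi_1:(u_1,\dots,u_n)\mapsto(u_1,\dots,u_{n_1})$ and $\pi_2:(u_1,\dots,u_n)\mapsto(u_{n_1+1},\dots,u_n)$ are $\F_q$-linear, and that the target $\F_q[\alpha_1]\times\F_q[\alpha_2]$ is an $\F_q$-vector space under componentwise addition and scalar multiplication. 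Since $\phi^{(n_1,n_2)} = (\phi^{(n_1)}\circ\pi_1,\ \phi^{(n_2)}\circ\pi_2)$, and each $\phi^{(n_i)}$ is the canonical $\F_q$-linear isomorphism $\F_q^{n_i}\xrightarrow{\ \sim\ }\F_{q^{n_i}}\cong\F_q[\alpha_i]$ recalled above, $\F_q$-linearity of $\phi^{(n_1,n_2)}$ is immediate.

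For bijectivity I would argue componentwise. Injectivity: if $\phi^{(n_1,n_2)}(u)=0$ then $\phi^{(n_1)}(\pi_1(u))=0$ and $\phi^{(n_2)}(\pi_2(u))=0$; injectivity of each $\phi^{(n_i)}$ forces $\pi_1(u)=0$ and $\pi_2(u)=0$, hence $u=0$. Surjectivity: given $(a,b)\in\F_q[\alpha_1]\times\F_q[\alpha_2]$, the concatenation of $(\phi^{(n_1)})^{-1}(a)$ and $(\phi^{(n_2)})^{-1}(b)$ is a preimage. Alternatively, once $\F_q$-linearity and injectivity are in hand, a dimension count closes the argument, since $\dim_{\F_q}\F_q^n = n = n_1+n_2 = \dim_{\F_q}\F_q[\alpha_1] + \dim_{\F_q}\F_q[\alpha_2] = \dim_{\F_q}\bigl(\F_q[\alpha_1]\times\F_q[\alpha_2]\bigr)$.

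I do not expect a genuine obstacle here; the only point requiring care is that the claim is an isomorphism of $\F_q$-vector spaces only, not of rings — the product $\F_q[\alpha_1]\times\F_q[\alpha_2]$ has a ring structure that $\phi^{(n_1,n_2)}$ does not respect, but only the linear structure is needed for the block-wise use of the relations $\phi^{(n_i)}(vP_i)=\phi^{(n_i)}(v)\alpha_i$ in the sequel. Finally, the extension to a rational canonical form with $r$ blocks of sizes $n_1,\dots,n_r$, $\sum_j n_j = n$, is verbatim the same: split $\F_q^n$ into $r$ coordinate blocks and take the direct sum of the $r$ isomorphisms $\phi^{(n_j)}$.
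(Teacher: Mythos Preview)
Your argument is correct and is exactly the natural one: the paper itself does not supply a proof but simply cites \cite[Theorem~43]{tr11p}, so there is nothing to compare against beyond observing that your direct-sum decomposition of $\phi^{(n_1,n_2)}$ into the two coordinate isomorphisms $\phi^{(n_1)}$ and $\phi^{(n_2)}$ is the expected justification.
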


$\F_q^{*}[\alpha_{1}] \times \F_q^{*}[\alpha_{2}]$ forms a multiplicative group with the multiplication
\[(\beta_{1}, \beta_{2})(\gamma_{1}, \gamma_{2}) = (\beta_{1}\gamma_{1}, \beta_{2}\gamma_{2})  .\]

\begin{definition}
We define the following group action:
  \begin{align*}
    * : \Lambda^{k}(\F_{q}[\alpha_{1}]\times\F_{q}[\alpha_{2}])\times (\F_q^{*}[\alpha_{1}] \times \F_q^{*}[\alpha_{2}]) 
    &\longrightarrow \Lambda^{k}(\F_{q}[\alpha_{1}]\times\F_{q}[\alpha_{2}])\\
    ((v_{11},v_{12})\wedge \dots \wedge (v_{k1}, v_{k2}), (\beta_{1},\beta_{2})) &\longmapsto 
    ((v_{11}\beta_{1}, v_{12}\beta_{2}) \wedge \dots \wedge
    (v_{k1}\beta_{1},v_{k2}\beta_{2})) .
  \end{align*}
\end{definition}

\begin{theorem}\label{thm:main}
  The following map is an embedding of the Grassmannian:
  \begin{align*}
   \varphi^{(n_{1},n_{2})} : \G &\longrightarrow\mathbb{P}(\Lambda^{k}(\F_{q}[\alpha_{1}]\times \F_{q}[\alpha_{2}]) )\\
    \rs (U) &\longmapsto (\phi^{(n_1,n_2)}(U_{1}) \wedge \dots \wedge \phi^{{(n_1,n_2)}}(U_{k}))
    *\F_{q}^*
  \end{align*}
It is isomorphic to $\varphi(\G)$ with
\begin{align*}
((v_{11}, v_{12}) \wedge \dots \wedge (v_{k1}, v_{k2})) = \sum_{(i_1,\dots,i_k)\in\{1,2\}^k} v_{1i_1} \wedge \dots \wedge v_{ki_k}  .
\end{align*}
\end{theorem}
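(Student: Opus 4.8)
The plan is to follow the same three steps that establish Theorem~\ref{isoem} in the irreducible case: first check that $\varphi^{(n_1,n_2)}$ is well defined, then that it is an embedding, and finally identify it explicitly with $\varphi$. For well-definedness, observe that any other matrix representation of $\Uvs$ has the form $TU$ with $T\in\GL_k$; since $\phi^{(n_1,n_2)}$ is $\F_q$-linear by Theorem~\ref{th14}, the rows of $TU$ are mapped to the same $\F_q$-linear combinations of the $\phi^{(n_1,n_2)}(U_i)$, so that $\phi^{(n_1,n_2)}((TU)_1)\wedge\dots\wedge\phi^{(n_1,n_2)}((TU)_k)=\det(T)\cdot\phi^{(n_1,n_2)}(U_1)\wedge\dots\wedge\phi^{(n_1,n_2)}(U_k)$. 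As $\det(T)\in\F_q^*$ and the action $*$ restricted to the diagonal copy $\F_q^*\cong\{(c,c)\mid c\in\F_q^*\}\leq\F_q^*[\alpha_1]\times\F_q^*[\alpha_2]$ multiplies a decomposable $k$-vector by $c^k$, the orbit $(\,\cdot\,)*\F_q^*$ does not change. Moreover $\phi^{(n_1,n_2)}$ is injective and $U$ has rank $k$, so the wedge is nonzero and $\varphi^{(n_1,n_2)}(\rs(U))$ is a genuine point of $\mathbb{P}(\Lambda^{k}(\F_{q}[\alpha_{1}]\times\F_{q}[\alpha_{2}]))$.

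For the embedding property, recall that the ordinary Pl\"ucker map $\rs(U)\mapsto U_{1}\wedge\dots\wedge U_{k}$ into $\mathbb{P}(\Lambda^{k}(\F_q^{n}))$ is an embedding, and that $\phi^{(n_1,n_2)}\colon\F_q^{n}\to\F_{q}[\alpha_{1}]\times\F_{q}[\alpha_{2}]$ is a vector space isomorphism (Theorem~\ref{th14}). It therefore induces a linear isomorphism $\Lambda^{k}(\phi^{(n_1,n_2)})$ of the two $k$-th exterior powers, and by construction $\varphi^{(n_1,n_2)}$ is the composition of the Pl\"ucker embedding with the projectivization of this linear isomorphism. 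Since composing an embedding with a bijective linear map yields an embedding, this proves the first claim.

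It remains to make the isomorphism with $\varphi(\G)$ explicit. Equip $\F_{q}[\alpha_{1}]\times\F_{q}[\alpha_{2}]$ with the basis $e_{a}:=(\alpha_{1}^{a-1},0)$ for $1\leq a\leq n_{1}$ and $e_{n_{1}+b}:=(0,\alpha_{2}^{b-1})$ for $1\leq b\leq n_{2}$; this is exactly the image under $\phi^{(n_1,n_2)}$ of the standard basis of $\F_q^{n}$, hence $\{e_{a_{1}}\wedge\dots\wedge e_{a_{k}}\mid 1\leq a_{1}<\dots<a_{k}\leq n\}$ is a basis of $\Lambda^{k}(\F_{q}[\alpha_{1}]\times\F_{q}[\alpha_{2}])$. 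Writing the image of the $i$-th row of $U$ as $\phi^{(n_1,n_2)}(U_{i})=(\phi^{(n_1)}(U_{i,1}),0)+(0,\phi^{(n_2)}(U_{i,2}))$, where $U_{i,1},U_{i,2}$ denote the first, resp. second, block of that row, and expanding the wedge product multilinearly produces precisely the stated decomposition into $2^{k}$ terms $v_{1i_{1}}\wedge\dots\wedge v_{ki_{k}}$ with $(i_{1},\dots,i_{k})\in\{1,2\}^{k}$, those with more than $n_{1}$ first-component factors or more than $n_{2}$ second-component factors being zero. Expanding each surviving term further in the basis $\{e_{a_{1}}\wedge\dots\wedge e_{a_{k}}\}$ and collecting coefficients is exactly the Laplace (Cauchy--Binet) expansion of the $k\times k$ minors of $U=[\,U_{1}\mid U_{2}\,]$ along the column partition $\{1,\dots,n_{1}\}\cup\{n_{1}+1,\dots,n\}$, so the coefficient of $e_{a_{1}}\wedge\dots\wedge e_{a_{k}}$ equals $\det(U[a_{1},\dots,a_{k}])$ up to the global $\F_{q}^{*}$-scalar. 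Reading off these coefficients in the order fixed for the Pl\"ucker embedding thus recovers $\varphi(\rs(U))$, giving the claimed isomorphism $\varphi^{(n_{1},n_{2})}(\G)\cong\varphi(\G)$.

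The main obstacle is the bookkeeping in this last step: one must match the signs coming from reordering factors in the exterior algebra with the signs in the Laplace expansion of a determinant along a column partition. This is nothing but the coordinate form of the classical decomposition $\Lambda^{k}(V_{1}\oplus V_{2})=\bigoplus_{a+b=k}\Lambda^{a}(V_{1})\otimes\Lambda^{b}(V_{2})$, so no new phenomenon arises; indeed, everything above is the two-block specialization of the single-block argument underlying Theorem~\ref{isoem}, and the case of an arbitrary number of blocks follows by iterating the direct-sum decomposition, i.e. by induction on the number of blocks.
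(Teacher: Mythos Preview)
Your argument is correct and follows essentially the same route as the paper: both identify $\varphi^{(n_{1},n_{2})}$ as the classical Pl\"ucker embedding $\varphi$ composed with the linear isomorphism $\Lambda^{k}(\phi^{(n_{1},n_{2})})$ induced by the vector space isomorphism of Theorem~\ref{th14}, and then read off the explicit correspondence by expanding the wedge in the basis coming from the powers of $\alpha_{1}$ and $\alpha_{2}$. Your version is somewhat more careful---you add the well-definedness check under $\GL_{k}$-row operations and phrase the coefficient matching as a Laplace/Cauchy--Binet expansion along the column partition---but the underlying idea is the same.
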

\begin{proof}
From Theorem \ref{isoem} we already know that $\varphi$ is an embedding. Next we show that the below defined $\psi : \varphi'(\G) \rightarrow \varphi(\G)$ is an isomorphism. Without loss of generality assume that $n_1\geq n_2$. We will use the superscript $^{(j)}$ as an additional index for the used variables and set $\lambda_{hi}^{(2)} := 0 \; \forall h,i \in \{0,\dots,n_1-1\} , i\geq n_2$. 
 \begin{align*}
   &(\phi^{(n_1,n_2)}(U_{1}) \wedge \dots \wedge \phi^{(n_1,n_2)}(U_{k})) * \F_{q}^* \\
   =& ((\sum_{i=0}^{n_1-1}\lambda_{1i}^{(1)} \alpha_1^{i}, \sum_{i=0}^{n_2-1}\lambda_{1i}^{(2)} \alpha_2^i) \wedge \dots 
   \wedge  (\sum_{i=0}^{n_1-1}\lambda_{ki}^{(1)} \alpha_1^{i}, \sum_{i=0}^{n_2-1}\lambda_{ki}^{(2)}\alpha_2^i)) * \F_{q}^* \\
%   =& (\sum_{i=0}^{n_1-1} (\lambda_{1i_1}^{(1)} \alpha_1^{i}, \lambda_{1i_1}^{(2)} \alpha_2^i) \wedge \dots 
%   \wedge \sum_{i=0}^{n_1-1} (\lambda_{ki_k}^{(1)} \alpha_1^{i}, \lambda_{ki_k}^{(2)}\alpha_2^i)) * \F_{q}^* \\
%    =& \sum_{0\leq i_{1},\dots,i_{k} < n_1} ((\lambda_{1i_1}^{(1)} \alpha_1^{i_1}, \lambda_{1i_1}^{(2)} \alpha_2^{i_1}) 
%    \wedge \dots \wedge (\lambda_{ki_k}^{(1)} \alpha_1^{i_k}, \lambda_{ki_k}^{(2)} \alpha_2^{i_k})) * \F_{q}^* \\
%    =& \sum_{(j_1,\dots,j_k)\in\{1,2\}^k}(\sum_{0\leq i_{1},\dots,i_{k} < n} \lambda_{1i_1}^{(j)} \alpha_j^{i_1} 
%    \wedge \dots \wedge \lambda_{ki_k}^{(j)} \alpha_j^{i_k}) * \F_{q}^* \\
    =& \sum_{(j_1,\dots,j_k)\in\{1,2\}^k}(\sum_{i=0}^{n_{j_1}-1}\lambda_{1i}^{(j_1)} \alpha_{j_1}^{i} 
    \wedge \dots \wedge \sum_{i=0}^{n_{j_k}-1}\lambda_{ki}^{(j_k)} \alpha_{j_k}^{i}) * \F_{q}^* \\
    =& \sum_{(j_1,\dots,j_k)\in\{1,2\}^k}(\sum_{0\leq i_{1},\dots,i_{k} < n_1}   \lambda_{1i_1}^{(j_1)}\dots \lambda_{ki_k}^{(j_k)}(\alpha_{j_1}^{i_1} \wedge \dots \wedge  \alpha_{j_k}^{i_k})) * \F_{q}^*  \\
    =&  \sum_{(j_1,\dots,j_k)\in\{1,2\}^k}(\sum_{0\leq i_{1}<\dots<i_{k} < n_1} \mu_{i_1,\dots, i_k}^{(j_1, \dots, j_k)}(\alpha_{j_1}^{i_1} 
    \wedge \dots \wedge  \alpha_{j_k}^{i_k})) * \F_{q}^* \\
%     =&  \sum_{(j_1,\dots,j_k)\in\{1,2\}^k}\sum_{0\leq i_{1}<\dots<i_{k} < n_1} \mu_{{1i_1}^{(j_1)}\dots {ki_k}^{(j_k)}} (\alpha_{j_1}^{i_1} 
%     \wedge \dots \wedge  \alpha_{j_k}^{i_k}) * \F_{q}^* \\
   &\hspace{-0.3cm}\overset{\psi}{\longmapsto} [\nu_{0, \dots , k-1} : \dots : \nu_{n-k, \dots , n-1} ]
 \end{align*}
 where 
\[\mu_{i_1,\dots, i_k}^{(j_1, \dots, j_k)} := \sum_{\sigma \in S_{k}}
 (-1)^{\sigma} \lambda_{1\sigma(i_{1})}^{(j_1)} \dots  \lambda_{k\sigma(i_{k})}^{(j_k)}
 \in \F_{q}\]
and
\[\nu_{i_1+(j_1-1)n_1,\dots ,i_k+(j_k-1)n_1} := \mu_{i_1,\dots, i_k}^{(j_1, \dots, j_k)}.\]
 Since $\psi$ is an isomorphism and $\varphi' = \psi^{-1} \circ
 \varphi$, it follows that $\varphi'$ is an embedding as well.   
\end{proof}

\begin{theorem}
In analogy to the irreducible case the following holds:
\[\varphi^{(n_{1},n_{2})}(\Uvs P) = \varphi^{(n_{1},n_{2})}(\Uvs) * (\alpha_1, \alpha_2)  .\]
Hence, the code $\Cvs = \{\Uvs
P^i \mid i=0,\dots,\mathrm{ord}(P)-1\}$ has a corresponding ``Pl\"ucker
orbit''
$$
\varphi^{(n_{1},n_{2})}(\Cvs) = \{\varphi^{(n_{1},n_{2})}(\Uvs) * (\alpha_1,\alpha_2)^i \mid i=0,\dots,\mathrm{ord}(P)-1\}
 = \varphi^{(n_{1},n_{2})}(\Uvs) * \langle (\alpha_1,\alpha_2) \rangle   .
$$
\end{theorem}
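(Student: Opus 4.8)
The statement has two parts: the key commutation identity $\varphi^{(n_1,n_2)}(\Uvs P) = \varphi^{(n_1,n_2)}(\Uvs) * (\alpha_1,\alpha_2)$, and the orbit description, which follows formally from the identity once it is established. So the plan is to concentrate on the identity and then invoke the group-action property of $*$ to unwind the orbit.

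First I would pick a matrix representation $U = [\,U_1 \mid U_2\,]$ of $\Uvs$ with $U_1 \in \Mat_{k\times n_1}$ and $U_2 \in \Mat_{k\times n_2}$, so that, as already noted in the text, $\Uvs P = \rs[\,U_1 P_1 \mid U_2 P_2\,]$. Denote the $h$-th row of $U$ by $U_h = (U_{h,1}, U_{h,2})$ with $U_{h,1}\in\F_q^{n_1}$, $U_{h,2}\in\F_q^{n_2}$. The heart of the computation is the block version of the commutation rule $\phi^{(n)}(vP)=\phi^{(n)}(v)\alpha$: applying it in each block separately gives
\[
\phi^{(n_1,n_2)}\big((U_{h,1}P_1,\,U_{h,2}P_2)\big) = \big(\phi^{(n_1)}(U_{h,1})\,\alpha_1,\ \phi^{(n_2)}(U_{h,2})\,\alpha_2\big),
\]
i.e. $\phi^{(n_1,n_2)}$ of the $h$-th row of $UP$ equals $\phi^{(n_1,n_2)}(U_h)$ with its first coordinate scaled by $\alpha_1$ and its second by $\alpha_2$. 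Feeding this into the wedge that defines $\varphi^{(n_1,n_2)}$ and comparing with the definition of the group action $*$ by $(\alpha_1,\alpha_2)$ yields exactly $\varphi^{(n_1,n_2)}(\Uvs P) = \varphi^{(n_1,n_2)}(\Uvs) * (\alpha_1,\alpha_2)$. One should also remark that the $*\F_q^*$ normalisation is respected, so the identity is well-defined on the projective level — but since scaling all rows and scaling by $(\alpha_1,\alpha_2)$ commute, this is immediate.

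For the orbit part, I would argue by induction on $i$ using the fact (stated in the definition of $*$ in the two-block case, and proved just as in the irreducible case) that $*$ is a group action: $\big(x * (\beta_1,\beta_2)\big) * (\gamma_1,\gamma_2) = x * (\beta_1\gamma_1,\beta_2\gamma_2)$. Then $\varphi^{(n_1,n_2)}(\Uvs P^i) = \varphi^{(n_1,n_2)}(\Uvs) * (\alpha_1,\alpha_2)^i$ follows, and letting $i$ run over $0,\dots,\mathrm{ord}(P)-1$ gives the displayed description of $\varphi^{(n_1,n_2)}(\Cvs)$ as the orbit $\varphi^{(n_1,n_2)}(\Uvs) * \langle(\alpha_1,\alpha_2)\rangle$.

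The only genuinely delicate point is the block-diagonal commutation identity itself: one must be careful that $P$ acts on $\F_q^n$ by right multiplication with the $U_1$-part seeing only $P_1$ and the $U_2$-part only $P_2$ — this is exactly where the block-diagonal form of $P$ is used — and then that $\phi^{(n_1)}$ and $\phi^{(n_2)}$ each intertwine their respective companion-matrix action with multiplication by $\alpha_1$, resp. $\alpha_2$, as recalled from \cite{ro12}. Everything past that is bookkeeping with multilinearity of the wedge product and the definitions, entirely parallel to \cite[Theorem 8]{ro12}; I would simply say "in analogy to the irreducible case" for those steps rather than writing them out. The generalization to more than two blocks is identical, replacing $(\alpha_1,\alpha_2)$ by the tuple of roots.
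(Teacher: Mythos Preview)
Your proposal is correct and is exactly the argument the paper has in mind: the paper gives no explicit proof for this theorem, relying instead on the phrase ``in analogy to the irreducible case'' to signal that the proof of \cite[Theorem~8]{ro12} carries over block-wise. Your write-up spells out precisely this analogy --- the block-diagonal form of $P$ lets $\phi^{(n_1)}$ and $\phi^{(n_2)}$ act independently, each intertwining its companion matrix with multiplication by $\alpha_1$ resp.\ $\alpha_2$ --- and then the wedge/group-action bookkeeping is identical to the irreducible case; this is the intended proof.
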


%\begin{proof}
%\begin{align*}
%\varphi'(\Uvs P) &= \F_{q}^* \cdot (\phi(U_{1}P) 
%\wedge  \dots\wedge  \phi(U_{k}P)) = \F_{q}^* \cdot 
%(\phi(U_{1})\alpha \wedge  \dots\wedge  \phi(U_{k})\alpha) \\
%&= \F_{q}^* \cdot (\phi(U_{1}) \wedge  \dots\wedge  \phi(U_{k})) *\alpha 
%\end{align*}
% 
%\end{proof}

\begin{example}
  Over $\F_2$ let $p_{1}(x)=p_{2}(x)=(x^2+x+1)$,
 $$P=\left(\begin{array}{cc|cc} 0 & 1 & 0 &0 \\ 1&1&0&0 \\\hline 0&0&0&1\\0&0&1&1  \end{array}\right) 
\quad \textnormal{  and } \quad
\Uvs = \rs\left[\begin{array}{cccc}1 & 0&0&0\\ 0&1&1&0
    \end{array}\right]  .$$
  Then $\mathrm{ord}(P)=3$ and $ \varphi^{(2,2)}(\Uvs) = ((1_{1},0_{2}) \wedge (\alpha_{1}, 1_{2}))$ where we use the subscripts to distinguish between the elements of $\F_{q}[\alpha_{1}]$ and $\F_{q}[\alpha_{2}]$. The elements of the Pl\"ucker orbit
  $\varphi^{(2,2)}(\Uvs)*\langle (\alpha_{1}, \alpha_{2}) \rangle$ are
  \begin{align*}
    & ((1_{1},0_{2}) \wedge (\alpha_{1}, 1_{2})) = (1_{1}\wedge \alpha_{1}) + (1_{1}\wedge 1_{2}) ,\\
    &((\alpha_{1},0_{2}) \wedge (\alpha_{1}+1_{1}, \alpha_{2})) = (\alpha_{1} \wedge \alpha_{1}+1_{1}) + (\alpha_{1},\alpha_{2}) = (\alpha_{1} \wedge 1_{1}) + (\alpha_{1}\wedge \alpha_{2}),\\
    &((\alpha_{1}+1_{1},0_{2}) \wedge (1_{1}, \alpha_{2}+1_{2}))= (\alpha_{1}+1_{1} \wedge 1_{1}) + (\alpha_{1}+1_{1},\alpha_{2}+1_{2})  \\
&\hspace{3.8cm}= (\alpha_{1} \wedge 1_{1}) + (\alpha_{1}\wedge \alpha_{2}) + (\alpha_{1}\wedge 1_{2}) + (1_{1}\wedge \alpha_{2}) + (1_{1}\wedge 1_{2}).
     \end{align*}
The corresponding Grassmann
  coordinates are
  \begin{align*}
     [ 1:1:0:0:0:0] ,
     [ 1:0:0:0:1:0] ,
     [ 1:1:1:1:1:0] .
  \end{align*}
One can easily verify that these are the Grassmann coordinates of the corresponding orbit code $\Uvs \langle P\rangle$.
\end{example}

  %%%%%%%%%%%%%%%%%%%%%%%%%%%%%%%%%%%%%%%%%%%%%%%%%%%%%%
 
 \subsection{Non-Completely Reducible Cyclic Orbit Codes}

But what happens if one of the blocks of the rational canonical form is not irreducible, i.e. it is the companion matrix of a reducible polynomial? For simplicity we restrict our investigation to the case that $P$ is the companion matrix of $p(x)=p_1(x)^2$ where $p_1(x)\in \F_q[x]$ is irreducible over $\F_q$ and of degree $n/2$. Together with the results of the previous subsection one can then easily generalize the results of this subsection to any other non-completely reducible generator matrix.

Consider $\F_q[x]/p(x)$ as the isomorphic representation of $\F_q^{n}$ via 
\begin{align*}
\bar{\phi}^{(n)}: \F_q^n &\longrightarrow  \F_q[x]/p(x)\\
(v_1,\dots,v_n) &\longmapsto \sum_{i=1}^{n} v_{i}x^{i-1} .
\end{align*}
In this case the multiplication with $P$ still corresponds to multiplication with $x \mod p(x)$, the difference is that $\F_q[x]/p(x)$ is not a field anymore. This is why we cannot use the $\F_q$-algebra of some $\alpha$ anymore but have to work with $\F_q[x]/p(x)$. 

Since we did not use the field structure, the definitions and results of Section \ref{secICOC} are straightforwardly carried over to this setting. Therefore the proofs of the results are omitted in this subsection.

\begin{definition}
  We define the following operation on $ \Lambda^{k}(\F_q[x]/p(x))
  \cong \Lambda^{k}(\F_q^{n})$:
  % Let $\alpha$ be a primitive element of $\F_{q^{n}}$. Then
  \begin{align*}
    * : \Lambda^{k}(\F_q[x]/p(x))\times (\F_q[x]/p(x))\setminus\{0\} 
    &\longrightarrow \Lambda^{k}(\F_q[x]/p(x))\\
    ((v_{1}\wedge \dots \wedge v_{k}), \beta) &\longmapsto (v_{1}\wedge
    \dots \wedge v_{k}) * \beta := (v_{1}\beta \wedge \dots \wedge
    v_{k}\beta) .
  \end{align*}
  This is a group action since $((v_{1}\wedge \dots \wedge v_{k}) *
  \beta) * \gamma = (v_{1}\wedge \dots \wedge v_{k}) * (\beta\gamma)$.
\end{definition}

\begin{theorem}
  The following map is an embedding of the Grassmannian:
  \begin{align*}
    \bar{\varphi} : \G &\longrightarrow\mathbb{P}(\Lambda^{k}(\F_q[x]/p(x)) )\\
    \rs (U) &\longmapsto (\bar{\phi}^{(n)}(U_{1}) \wedge \dots \wedge \bar{\phi}^{(n)}(U_{k}))
    *\F_{q}^*
  \end{align*}
It is isomorphic to $\varphi(\G)$.
\end{theorem}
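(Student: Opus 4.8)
The plan is to mirror the proof of Theorem~\ref{thm:main}, replacing the field $\F_q[\alpha]$ throughout by the ring $R:=\F_q[x]/p(x)$, and to check that every step that was used there is in fact ring-theoretic rather than field-theoretic. First I would observe that $\bar\phi^{(n)}$ is an $\F_q$-vector space isomorphism $\F_q^n\to R$ (this is immediate, since $\{1,x,\dots,x^{n-1}\}$ is an $\F_q$-basis of $R$), so it induces an isomorphism $\Lambda^k(\F_q^n)\cong\Lambda^k(R)$ of $\F_q$-vector spaces, under which the Pl\"ucker image $\varphi'(\rs U)$ corresponds to $\bar\phi^{(n)}(U_1)\wedge\dots\wedge\bar\phi^{(n)}(U_k)$ up to the $\F_q^*$-action. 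Then I would exhibit the isomorphism $\psi:\bar\varphi(\G)\to\varphi(\G)$ exactly as in Theorem~\ref{isoem}/Theorem~\ref{thm:main}: expand each $\bar\phi^{(n)}(U_h)=\sum_{i=0}^{n-1}\lambda_{hi}x^i$, multiply out the wedge product multilinearly, collect the coefficient of $x^{i_1}\wedge\dots\wedge x^{i_k}$ for $0\le i_1<\dots<i_k<n$ into the minor $\mu_{i_1,\dots,i_k}=\sum_{\sigma\in S_k}(-1)^\sigma\lambda_{1\sigma(i_1)}\dots\lambda_{k\sigma(i_k)}$, and define $\psi$ by reading off these coefficients as the Grassmann coordinates $[\mu_{0,\dots,k-1}:\dots:\mu_{n-k,\dots,n-1}]$. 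Since this is the same linear identification used in Theorem~\ref{isoem}, and $\varphi$ is already known to be an embedding, $\bar\varphi=\psi^{-1}\circ\varphi$ is an embedding too.

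The key steps, in order, are: (1) $\bar\phi^{(n)}$ is an $\F_q$-linear isomorphism and hence so is the induced map on $\Lambda^k$; (2) the $*$-action is well defined — here one only needs that multiplication by a \emph{unit} $\beta\in R^*=(\F_q[x]/p(x))\setminus\{0\}$ is an $\F_q$-linear automorphism of $R$, so that $v\mapsto v\beta$ is invertible and the wedge of the images is again a nonzero decomposable element representing the correct scaled subspace; (3) the expansion of the wedge product and the definition of $\psi$, which is purely formal multilinear algebra over $\F_q$ and does not see the ring structure at all; (4) conclude $\bar\varphi$ is an embedding by composing with $\varphi$. The earlier remark in the subsection — ``since we did not use the field structure'' — is precisely the point: nowhere in the proof of Theorem~\ref{isoem} or Theorem~\ref{thm:main} did one invert a non-unit or use that $R$ is a domain.

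I expect the main (indeed the only) subtlety to be step (2): the definition of the $*$-action restricts $\beta$ to $(\F_q[x]/p(x))\setminus\{0\}$, and in the non-completely-reducible case this set contains zero divisors. One must check that this is still the right object, i.e.\ that $(\F_q[x]/p(x))\setminus\{0\}$ is closed under multiplication and is a group. This does \emph{not} hold for a general $p$, but it does hold when $p=p_1^2$ with $p_1$ irreducible: the only non-units of $R$ are the multiples of $p_1$, which form the unique maximal ideal, and a product of a non-unit with anything nonzero can still be nonzero (e.g.\ $p_1\cdot p_1=p_1^2=0$ shows it is \emph{not} closed). So in fact the cleaner route — and the one I would take in writing the proof up — is to restrict the $*$-action to the unit group $R^*$, note that multiplication by $P$ corresponds to multiplication by the unit $x\bmod p(x)$ (a unit since $p(0)=p_1(0)^2\neq 0$ as $p_1$ is irreducible of positive degree), and verify that $v\mapsto v\beta$ for $\beta\in R^*$ is an $\F_q$-linear bijection. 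With that in hand, the rest of the argument is a verbatim transcription of the proof of Theorem~\ref{thm:main} with $\F_q[\alpha_1]\times\F_q[\alpha_2]$ replaced by $\F_q[x]/p(x)$ and the single block index replacing the pair $(j_1,\dots,j_k)$; no new computation is needed.
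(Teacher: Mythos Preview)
Your approach is exactly what the paper intends: it omits the proof entirely, stating that since the field structure was not used in Section~\ref{secICOC}, the argument carries over verbatim with $\F_q[\alpha]$ replaced by $\F_q[x]/p(x)$. Your concern about $(\F_q[x]/p(x))\setminus\{0\}$ failing to be a multiplicative group is a valid observation about the preceding Definition, but note that the theorem itself only invokes $*\,\F_q^{*}$ (projectivization by base-field scalars), so no such issue arises in proving this particular statement.
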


\begin{theorem}
In analogy to the irreducible case the following holds:
\[\bar{\varphi}(\Uvs P) = \bar{\varphi}(\Uvs) * x \mod p(x)\]
Hence, the code $\Cvs = \{\Uvs
P^i \mid i=0,\dots,\mathrm{ord}(P)-1\}$ has a corresponding ``Pl\"ucker
orbit'':
\[
\bar{\varphi}(\Cvs) = \{\bar{\varphi}(\Uvs) * x^i \mid i=0,\dots,q^n-1\}
 = \bar{\varphi}(\Uvs) * \langle x \rangle
\]
\end{theorem}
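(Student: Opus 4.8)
The plan is to transcribe the proof of Theorem~8 of \cite{ro12} almost verbatim, replacing the field $\F_q[\alpha]$ by the ring $R:=\F_q[x]/p(x)$; as the text above already signals, that argument only manipulates the $*$-action together with the relation $\bar{\phi}^{(n)}(vP)=\bar{\phi}^{(n)}(v)\cdot x \bmod p(x)$, and neither ingredient requires $R$ to be a field. First I would fix a matrix representative $U$ of $\Uvs$ with rows $U_1,\dots,U_k$, so that $\Uvs P=\rs(UP)$ has rows $U_1P,\dots,U_kP$. Unwinding the definition of $\bar{\varphi}$ and applying the commutation relation recalled just before the definition to each row yields
\[
\bar{\varphi}(\Uvs P)=\bigl(\bar{\phi}^{(n)}(U_1)\,x\wedge\dots\wedge\bar{\phi}^{(n)}(U_k)\,x\bigr)*\F_q^{*}=\bigl((\bar{\phi}^{(n)}(U_1)\wedge\dots\wedge\bar{\phi}^{(n)}(U_k))*x\bigr)*\F_q^{*},
\]
the second equality being the very definition of the $*$-action by $x\bmod p(x)$. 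Since scaling a pure wedge by a constant $c\in\F_q^{*}$ and multiplying it by $x$ are both carried out by multiplications in the commutative ring $R$, these two operations commute; hence, at the level of $\F_q^{*}$-orbits, the right-hand side equals $\bigl((\bar{\phi}^{(n)}(U_1)\wedge\dots\wedge\bar{\phi}^{(n)}(U_k))*\F_q^{*}\bigr)*x=\bar{\varphi}(\Uvs)*x$. This gives $\bar{\varphi}(\Uvs P)=\bar{\varphi}(\Uvs)*x\bmod p(x)$; independence of the chosen representative $U$ is inherited from the preceding theorem, which already asserts that $\bar{\varphi}$ is well defined.

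For the orbit statement I would first check that $x\bmod p(x)$ is a unit of $R$: the companion matrix $P$ has $\det P=\pm p(0)$, so $P\in\GL_n$ forces $p(0)\neq 0$ (here $p(0)=p_1(0)^{2}$), i.e. $\gcd(x,p(x))=1$. Moreover, because $\bar{\phi}^{(n)}$ conjugates multiplication by $P$ on $\F_q^n$ to multiplication by $x$ on $R$, one has $P^i=P^j$ iff $x^i\equiv x^j\bmod p(x)$, so $\langle x\rangle$ is a finite cyclic group of order $\mathrm{ord}(P)$. Using the identity $((v)*\beta)*\gamma=(v)*(\beta\gamma)$ from the definition, the first step now iterates: by induction on $i$,
\[
\bar{\varphi}(\Uvs P^{i})=\bar{\varphi}(\Uvs P^{i-1})*x=\bigl(\bar{\varphi}(\Uvs)*x^{i-1}\bigr)*x=\bar{\varphi}(\Uvs)*x^{i}
\]
for every $i\geq 0$. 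Letting $i$ run over a full period --- in particular over $0,\dots,q^{n}-1$, which contains $0,\dots,\mathrm{ord}(P)-1$ --- then gives $\bar{\varphi}(\Cvs)=\{\bar{\varphi}(\Uvs)*x^{i}\}=\bar{\varphi}(\Uvs)*\langle x\rangle$, as claimed.

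I do not expect a genuine obstacle here: the core computation is a single line, and the only points that need attention are bookkeeping ones. Namely: (i) that $x\bmod p(x)$ is a unit, which is what makes $\langle x\rangle$ an honest group and closes the orbit up --- this is the one place where the failure of $R$ to be a field matters, forcing one to restrict the acting set of $*$ to the unit group $R^{*}$, which however harmlessly contains both $\langle x\rangle$ and $\F_q^{*}$; (ii) that the $*$-action descends to $\mathbb{P}(\Lambda^{k}(R))$ and commutes with $\F_q^{*}$-scaling, which follows from commutativity of $R$; and (iii) independence of the matrix representative, already packaged into the previous theorem. If full self-containedness were wanted, I would additionally recall why $\bar{\phi}^{(n)}(vP)=\bar{\phi}^{(n)}(v)\,x\bmod p(x)$, namely the short companion-matrix computation $\bar{\phi}^{(n)}(vP)=x\,\bar{\phi}^{(n)}(v)-v_n\,p(x)$ underlying the statement made in the paragraph preceding the definition.
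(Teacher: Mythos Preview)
Your proposal is correct and follows precisely the route the paper indicates: the paper omits the proof entirely, stating that the argument from the irreducible case carries over verbatim since the field structure of $\F_q[\alpha]$ was never used, and your write-up is exactly that transcription. Your additional care in checking that $x\bmod p(x)$ is a unit (so that $\langle x\rangle$ is genuinely a group and the $*$-action makes sense) is a welcome clarification of a point the paper glosses over.
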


\begin{example}
  Over $\F_2$ let $P$ be the companion matrix of $p(x)=(x^2+x+1)^{2}=x^4+x^2+1$ and $\Uvs \in \mathcal{G}_2(2,4)$
  such that $\bar{\phi}(\Uvs)=\{0, 1, x+x^2, 1+x+x^2\}$,
  i.e.
  \[\Uvs = \rs\left[\begin{array}{cccc}1 & 0&0&0\\ 0&1&1&0
    \end{array}\right]  .\]
  Then $ \bar{\varphi}(\Uvs) = (1 \wedge x+x^2)$. The elements of the Pl\"ucker orbit
  $\bar{\varphi}(\Uvs)*\langle x \rangle$ are
  \begin{align*}
  & (1\wedge x+x^2) = (1\wedge x) + (1\wedge x^2) \\
  & (x\wedge x^2+x^3) = (x\wedge x^2) + (x\wedge x^3)\\
  & (x^2\wedge x^3+x^4) = (x^2\wedge x^3+x^2+1) = (x^2\wedge x^3) + (x^2\wedge 1)\\
  & (x^{3}\wedge x^{4}+x^{3}+x)=(x^3\wedge x^{3}+x^{2}+x+1)= (x^3\wedge x^2) + (x^3\wedge x) +( x^3 \wedge 1)\\
  & (x^4\wedge x^{4}+x^{3}+x^{2}+x)= (x^{2}+1 \wedge x^{3}+x+1) \\
  &\hspace{3.6cm} = (x^2\wedge x^{3}) + (x^{2}\wedge x) + (x^2\wedge 1) + (1\wedge x^{3}) + (1\wedge x)\\
  & (x^{3}+x \wedge x^{4}+x^{2}+x)= (x^3+x \wedge x+1) = (x^3\wedge x) +(x^{3}\wedge 1)+ (x\wedge 1) 
     \end{align*}
The corresponding Grassmann
  coordinates are
  \begin{align*}
     [ 1:1:0:0:0:0] ,
     [ 0:0:0:1:1:0] ,
     [ 0:1:0:0:0:1] , \\
     [ 0:0:1:0:1:1] , 
     [ 1:1:1:1:0:1] , 
     [ 1:0:1:0:1:0] .
  \end{align*}
One can easily verify that these are the Grassmann coordinates of the corresponding orbit code.
\end{example}

%%%%%%%%%%%%%%%%%%%%%%%%%%%%%%%%%%%%%%%%%
\section{Conclusion}\label{sec4}

We described cyclic orbit codes within the Pl\"ucker space and showed that 
the orbit structure is preserved. For this we distinguished between irreducible, completely reducible and non-completely reducible generator matrices. 
Theorems \ref{isoem} and \ref{thm:main} show how this structure can be used for efficiently computing the Grassmann coordinates of the elements of a given cyclic orbit code. These coordinates can then be used for describing the balls of subspace radius $2t$ around the code words which is potentially important for new algebraic decoding algorithms.

%%%%%%%%%%%%%%%%%%%%%%%%%%%%%%%%%%%%%%%%%

%\begin{acknowledgement}
%\end{acknowledgement}

%\bibliography{/home/a/rosen/Bib/huge.bib}
%\bibliographystyle{plain}

\end{document}